\newtheorem{theorem}{Theorem}
\newcommand{\beq}{\begin{equation}}
\newcommand{\eeq}{\end{equation}}
\newcommand{\lb}{\left(}
\newcommand{\rb}{\right)}
\newcommand{\dsum}{\displaystyle\sum}
\def\adots{\mathinner{\mskip0mu\raise0pt\vbox{\kern7pt\hbox{.}}\mskip3mu
          \raise4pt\hbox{.}\mskip3mu\raise8pt\hbox{.}\mskip0mu}}
\newfont{\bb}{msbm10 scaled 1100}
\newcommand{\tr}{\mbox{tr}}
\newcommand{\Nt}{N_{t}}
\newcommand{\Nr}{N_{r}}
\newcommand{\bmh}{\mbox{\boldmath $h$}}
\newcommand{\CN}{{\cal CN}}
\newcommand{\bmx}{{\mathbf x}}
\newcommand{\bmy}{{\mathbf y}}
\newcommand{\bmv}{{\mathbf v}}
\newcommand{\bmG}{{\mathbf G}}
\newcommand{\bmH}{{\mathbf H}}
\newcommand{\bmQ}{{\mathbf Q}}
\newcommand{\bmR}{{\mathbf R}}
\renewcommand{\bmh}{{\mathbf h}}
\newcommand{\bmA}{{\mathbf A}}
\newcommand{\bmD}{{\mathbf D}}
\newcommand{\bmX}{{\mathbf X}}
\newcommand{\bmY}{{\mathbf Y}}
\newcommand{\bmC}{{\mathbf C}}
\newcommand{\bmHt}{\widetilde{\bmH}}
\newcommand{\bmI}{{\mathbf I}}
\newcommand{\kbar}{\overline{k}}
\renewcommand{\Nt}{M}
\renewcommand{\Nr}{N}
\newcommand{\matrice}[1]{\mbox{\bf #1}}
\newcommand{\mE}{\matrice{E}}
\newcommand{\bmHb}{\overline{\bmH}}
\title{A Refined Analysis of the Gap between Expected Rate for Partial CSIT and the Massive MIMO Rate Limit}
\name{Kalyana Gopala , Dirk Slock}
\address{EURECOM, Sophia-Antipolis, France\\
Email: {gopala,slock}@eurecom.fr}
\begin{document}
\ninept
%
\maketitle
\begin{abstract}
Optimal BeamFormers (BFs) that maximize the Weighted Sum Rate (WSR) for a Multiple-Input Multiple-Output (MIMO) interference broadcast channel (IBC) remains an important research area. Under practical scenarios, the problem is compounded by the fact that only partial channel state information at the transmitter (CSIT) is available. 
Hence, a typical choice of the optimization metric  is the Expected Weighted Sum Rate (EWSR). However, the presence of the expectation operator makes the optimization a daunting task. On the other hand,  for the particular, but significant, special case of  massive MIMO (MaMIMO), the EWSR converges to 
Expected Signal covariance Expected Interference covariance based WSR (ESEI-WSR) and this metric is more amenable to optimization. Recently, \cite{ShaoKin2017} considered a multi-user Multiple-Input Single-Output (MISO) scenario and proposed approximating the EWSR by ESEI-WSR. They then derived a constant bound for this approximation. This paper performs a refined analysis of the gap between EWSR and  ESEI-WSR criteria for finite antenna dimensions. 
\end{abstract}
\begin{keywords}
Beamforming, partial CSIT, EWSR, ESEI-WSR, MaMIMO
\end{keywords}
\section{Introduction}
\label{sec:intro}

Interference is the main limiting factor in wireless transmission.
Base stations (BSs) with multiple antennas are able to serve multiple Mobile Terminals (MTs) simultaneously, which is called
Spatial Division Multiple Access (SDMA) or 
Multi-User (MU) MIMO. We are particularly concerned here
with maximum Weighted Sum Rate (WSR) designs accounting for finite SNR. 
Typical approaches for maximizing WSR are
based on a link to Weighted Sum MSE (WSMSE) \cite{ChristensenTWC2008}  or an approach based on Difference of Convex function programming \cite{KimGiannaOptRAM} (which is actually better interpreted as an instance of majorization). However, these approaches rely on perfect channel CSIT, which is not practical. Hence, an alternative approach 
is to  maximize the EWSR for the case of partial CSIT. 

Partial CSIT formulations can typically be categorized as
either bounded error / worst case (relevant for quantization
error in digital feedback) or Gaussian error (relevant for analog
feedback, prediction error, second-order statistics information
etc.). The Gaussian CSIT formulation with mean and covariance
information was first introduced for SDMA
 (a Direction
of Arrival (DoA) based historical precedent of MU MIMO), in
which the channel outer product was typically replaced by the
transmit side channel correlation matrix, and worked out in
more detail for single user (SU) MIMO, e.g. \cite{FranciscoDirkAsil2005}. The use
of covariance CSIT was made in the context
of Massive MIMO \cite{YinGesbertJSAC2013}, where a not so rich propagation
environment leads to subspaces (slow CSIT) for the channel
vectors so that the fast CSIT can be reduced to the smaller
dimension of the subspace. Such CSIT (feedback) reduction
is especially crucial for Massive MIMO. Due to the difficulty in directly optimizing the EWSR metric, optimization of the expected WSMSE (EWSMSE), which is a lower bound for the EWSR,  was  proposed  in \cite{NegroDirkISWCS2012}. 
In fact, exact expressions exist for a number of MISO \cite{BjornsonTSP2010} and MIMO cases \cite{AlfanoVerduIntZurichSeminar2006}.
However, those expressions are very hard to interpret  and to optimize with respect to BFs.
This issue has led to the development of large system analysis to try to get simpler
expressions for the expected rate \cite{DumontInfoTheoryAsymptCovMIMO},\cite{TariccoInfoTheoryAsympMutualMIMO}. Recently, though under a single user MIMO setting, the authors \cite{KalySlockIcassp2017} used a large system approximation for the optimization of the EWSR metric under partial CSIT to counter the impact of Doppler created Inter Carrier Interference (ICI). On the other hand,  for the particular, but significant, special case of MaMIMO where the number of transmit antennas is large compared to the number of receive antennas, the EWSR converges to 
ESEI-WSR and this metric is more amenable to optimization. In another recent publication, \cite{ShaoKin2017}
considered a multi-user Multiple-Input Single-Output (MISO) scenario and proposed approximating the EWSR by ESEI-WSR. They then derived a constant bound for this approximation. The approximate metric was then used for optimization of the EWSR. Inspired by this, we perform a refined analysis of the gap between EWSR and  ESEI-WSR criteria  for finite antenna dimensions to evaluate the usefulness of using the ESEI-WSR metric (that is more mathematically tractable) instead of the EWSR.

  The main goal of this paper is to show that the much simpler expressions obtained in the ESEI approximation (MaMIMO limit) in fact exhibit only a finite and even small gap
to the exact expected rate. Towards this end, we first show in section \ref{subsec:gap_monotone} for a general non-zero mean correlated MIMO scenario  that the gap is monotonically increasing as a function of SNR and hence is maximum at infinite SNR. Then, we go about deriving this gap at infinite SNR for specific scenarios like uncorrelated MISO (section \ref{subsec:miso_iid}), correlated MISO (section \ref{subsec:gap_miso_corr}) and uncorrelated MIMO(section \ref{subsec:gap_mimo_zm_uncorr}).  The swift reduction in the gap with increasing number of antennas is clearly seen for the MISO scenarios.
The second order Taylor  Series Expansion of EWSR for a general MIMO setting is also derived in section \ref{subsec:TaylExp} and observed to concur with the infinite SNR limits for the gap derived independently. 
Henceforth, the term gap would refer to the gap between ESEI-WSR and the EWSR. In the following text, the notation $|\bmA|$ refers to the determinant of the matrix $\bmA$. $\CN(\mu,C)$ refers to a complex Gaussian distribution with mean $\mu$ and covariance $C$. In this paper, Tx may denote transmit/transmitter/transmission and Rx may denote receive/receiver/reception. 
\vspace{-5mm}
\section{MIMO IBC Signal Model}
\label{sec:mimo_ibc_sig_model}
\vspace{-3mm}
Consider  an IBC with $C$ cells with a total of $K$ users with $d_k$ streams per user. 
We shall consider a system-wide numbering of the users. User $k$ has $\Nr_k$ antennas is served by BS $b_k$.
The $\Nr_k\times d_k$ received signal at user $k$ in cell $b_k$ is,
\beq
\bmy_k\! =\! \underbrace{\bmH_{k,b_k}\, \bmG_k\, \bmx_k}_{\mbox{signal}} + \!\!\!\underbrace{\sum_{\stackrel{i\neq k}{b_i=b_k}} \!\!\bmH_{k,b_k}\,\bmG_i\, \bmx_i}_{\mbox{intracell interf.}} +
\!\!\underbrace{\sum_{j\neq b_k} \sum_{i: b_i=j} \! \bmH_{k,j}\,\bmG_i\, \bmx_i}_{\mbox{intercell interf.}}\! + \!\bmv_k
\label{eqIBCM1}
\eeq
where $\bmx_k$ is the intended (white, unit variance)  signal, 
$\bmH_{k,b_k}$ is the $\Nr_k\times\Nt_{b_k}$ channel from BS $b_k$ to user $k$. 
BS $b_k$ serves $K_{b_k}=\sum_{i:b_i=b_k}1$ users.
We consider a noise whitened signal representation so  that we get for the noise
$\bmv_k\sim\CN(0, I_{\Nr_k})$.
The $\Nt_{b_k}\times d_k$ spatial Tx filter or beamformer (BF) is $\bmG_k$.

The scenario of interest is that of partial CSIT available globally with all the BSs. The Gaussian CSIT model for the partial CSIT is 
\beq
\bmH_{k,b_k} = {\bmHb}_{k,b_k} +  {\bmHt}_{k,b_k}\,\bmC_t^{1/2}
\label{CSIT4}
\eeq
where ${\bmHb}_{k,b_k} = \mE \bmH_{k,b_k}$, and $\bmC_t^{1/2}$  is the Hermitian square-roots
of the  Tx side covariance matrices. The elements of  $\bmHt_{k,b_k}$ are $\mbox{ i.i.d. }\sim\CN(0,1)$.
\beq
\begin{array}{l}
\mE_{\bmH_{k,b_k}|\bmHb_{k,b_k}}(\bmH_{k,b_k}-{\bmHb}_{k,b_k})(\bmH_{k,b_k}-{\bmHb}_{k,b_k})^{H} = \tr\{\bmC_t\}\bmI_{N_k} \\
\mE_{\bmH_{k,b_k}|\bmHb_{k,b_k}}(\bmH_{k,b_k}-{\bmHb}_{k,b_k})^{H}(\bmH_{k,b_k}-{\bmHb}_{k,b_k}) = N_k\bmC_t 
\end{array}
\label{CSIT5}
\eeq
Note that the expectation is done over $\bmH_{k,b_k}$, for a known $\bmHb_{k,b_k}$. This is true for all the expectation operations done in this paper. However, as the parameter over which the expectation is done is clear from the context, henceforth, we just mention the expectation operator $\mE$ to reduce notational overhead. It is also of interest to consider the total Tx side correlation matrix,
\beq
 \mE \bmH_{k,b_k}\bmH^{H}_{k,b_k} = {\bmHb}_{k,b_k}{\bmHb}^{H}_{k,b_k} + \tr\{\bmC_t\}\bmI_{N_k}. 
\eeq

\subsection{Expected WSR (EWSR)}
Once the CSIT is imperfect, various optimization criteria could be considered, such as outage capacity. Here we shall consider the EWSR 
for a known channel mean ${\bmHb}$.
\beq
\begin{aligned}
\text{EWSR}(\bmG) & =  \mE \sum_k u_k \ln  | \mathbf{I} \!+\!\bmG_k^H\bmH_{k,b_k}^H\bmR_{k}^{-1}\bmH_{k,b_k} \bmG_k | \\
  &= \mE  \sum_{k=1}^K u_k  \;\lb \ln | \bmR_{k} | - \ln | \bmR_{\kbar} |  \rb.
  \end{aligned}
\label{eqWSR8}
\eeq
Here, $\bmG$ represents the collection of BFs $\bmG_k$,  $u_k$ are rate weights. 
\beq\mbox{}\!\!
\begin{array}{l}
\bmR_k =  \bmH_{k,b_k}\bmQ_k\,\bmH_{k,b_k}^H +   \bmR_{\kbar}\, ,\;  \bmQ_i = \bmG_i\bmG_i^H\, ,\\[1mm]
\bmR_{\kbar} =  \dsum_{i\neq k}\bmH_{k,b_i}\bmQ_i\,\bmH_{k,b_i}^H +   I_{\Nr_k}\, .
\end{array}
\label{eqWSR2}
\eeq
The EWSR cost function needs to be augmented with the power constraints
$\sum_{k:b_k=j}\tr\{\bmQ_k\} \leq P_j $ .

\subsection{MaMIMO limit and ESEI-WSR}

If the number of Tx antennas $M$ becomes very large, we get a convergence for any quadratic term of the form
\beq
\bmH\bmQ\bmH^H \stackrel{\Nt\rightarrow\infty}{\longrightarrow}
\mE \, \bmH\bmQ\bmH^H 
 = \bmHb\bmQ\bmHb^H \!\!+ \tr\{\bmQ\bmC_t\}\,\bmI\\
\eeq
\label{massive1}
and hence we get the following MaMIMO limit matrices
\beq
\begin{array}{l}
\breve{\bmR}_k = \breve{\bmR}_{\kbar} + \bmHb_{k,b_k} \bmQ_k\bmHb_{k,b_k}^H + \tr\{\bmQ_k\bmC_{t,k,b_k}\}\,\bmI_{N_k} \\
\breve{\bmR}_{\kbar} = I_{\Nr_k} + \dsum_{i\neq k}^K\lb\bmHb_{k,b_i} \bmQ_{i}\bmHb_{k,b_i}^H + \tr\{\bmQ_{i}\bmC_{t,k,b_i}\}\, \bmI_{N_k} \rb\\
\end{array}
\label{massive4}
\eeq
Now, typical approaches to solve the WSR (eg. the DC approach in  \cite{KimGiannaOptRAM} ) can be run to obtain the max EWSR BF.
We shall refer to this approach as the ESEI-WSR approach as (channel dependent) signal and interference covariance matrices are replaced
by their expected values. In the following sections, we analyze the gap between the EWSR and the ESEI-WSR to suggest an approximation of the first by the latter in the design of the BF.  We would like to remark here that the ESEI-WSR may also be interpreted as the WSR that would be obtained  if we assume that  the received signal and interference are also Gaussian.

\section{EWSR to ESEI-WSR gap Analysis}
\label{sec:gap_analysis}
We are interested in bounding the difference between  ESEI-WSR and the EWSR. 
%
At the level of each user $k$, we stack the channel estimates relevant for each user $k$.
\begin{equation}
\begin{aligned}
    {\mathbf{H}}_k &= [{\mathbf{H}}_{k,b_1} \cdots {\mathbf{H}}_{k,b_{k-1}}  \,\, {\mathbf{H}}_{k,b_k} \,\,  {\mathbf{H}}_{k,b_{k+1}}  \cdots {\mathbf{H}}_{k,b_K} ] \\
&= \bmHb_k +   \bmHt_k\mathbf{C}_{t,k}^{\frac{1}{2}}
\end{aligned}
\end{equation}
where  the elements of $\bmHt_k$ are i.i.d $\sim \mathcal{CN}(0,1)$ and $ \bmHb_k$ refers to the mean part of ${\mathbf{H}}_k$. $\mathbf{C}_{t,k}$ is a block diagonal matrix whose $i^{th}$ diagonal block is $\bmC_{t,k,b_i}$.
 Let ${\mathbf{Q}}$ be a block diagonal matrix with  each  diagonal block being ${\mathbf{Q}}_k$. ${\mathbf{Q}}_{\bar{k}}$ is similar to 
${\mathbf{Q}}$ but with the $k^{th}$ block diagonal set to all zeros. Then,
\begin{equation}
\begin{aligned}
 {{\mathbf{R}}}_{{k}} = \mathbf{I} +  {\mathbf{H}}_k {\mathbf{Q}}{\mathbf{H}}_k^{H} , \qquad
{{\mathbf{R}}}_{\bar{k}} = \mathbf{I} +  {\mathbf{H}}_k {\mathbf{Q}}_{\bar{k}} {\mathbf{H}}_k^{H}  
\end{aligned}
\end{equation}
\beq
\begin{aligned}
 \text{EWSR}(\bmG) &=   \sum_{k=1}^K u_k  \;\mE_{\bmH_k}  \lb \ln | \bmR_{k} | - \ln | \bmR_{\kbar} |  \rb \\
  = & \mE   \sum_{k=1}^K u_k  \;\lb \ln | \mathbf{I} +  {\mathbf{H}}_k {\mathbf{Q}}{\mathbf{H}}_k^{H} | - \ln | \mathbf{I} +  {\mathbf{H}}_k {\mathbf{Q}}_{\bar{k}} {\mathbf{H}}_k^{H} |  \rb 
  \end{aligned}
\label{eqWSR9}
\eeq
\beq
\begin{aligned}
& \text{ESEI-WSR}(\bmG)  \\
 &=  \sum_{k=1}^K u_k  \;\lb \ln | \mathbf{I} +  \mE \, {\mathbf{H}}_k {\mathbf{Q}}{\mathbf{H}}_k^{H} | - \ln | \mathbf{I} +  \mE \, {\mathbf{H}}_k {\mathbf{Q}}_{\bar{k}} {\mathbf{H}}_k^{H} |  \rb 
  \end{aligned}
\label{eqESEI-WSR9}
\eeq
Thus, the EWSR and ESEI-WSR have been rewritten in a convenient format so that one can focus on the gap between the two by comparing terms of the form $\mE  \ln | \mathbf{I} +  {\mathbf{H}}_k {\mathbf{Q}}{\mathbf{H}}_k^{H} | $ and  $\ln | \mathbf{I} +  \mE \, {\mathbf{H}}_k {\mathbf{Q}}{\mathbf{H}}_k^{H} |$. 
\subsection{Monotonicity of  gap  with SNR}
\label{subsec:gap_monotone}
For an SNR $\rho$, define
\beq
\Gamma(\rho) = \ln | \mathbf{I} +  \rho \mE  {\bmH}'_k {{\bmH}'_k}^{H} | - \mE  \ln | \mathbf{I} +  \rho  {\bmH}'_k {{\bmH}'_k}^{H} | 
\eeq
where ${\bmH}'_k \sim \CN(\bmHb'_k , \bmC)$, $ \bmHb'_k =    \frac{1}{\sqrt{\rho}}  {\bmHb}_k {\mathbf{Q}}^{\frac{1}{2}}$, and $\bmC = \frac{1}{{\rho}} \bmC_t^{\frac{1}{2}} \bmQ \bmC_t^{\frac{1}{2}} $.
Then,  $\mathbf{I} +   {\mathbf{H}}_k {\mathbf{Q}}{\mathbf{H}}_k^{H} = \mathbf{I} +   \rho \mE  {\bmH}'_k {{\bmH}'_k}^{H}$.

\begin{theorem}
\label{thm:monotone_inc}
$\Gamma(\rho)$  is monotonically increasing in $\rho$ 
\end{theorem}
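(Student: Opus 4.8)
\smallskip
\noindent\emph{Proof strategy.} The plan is to establish that $\Gamma$ is differentiable on $(0,\infty)$ with $\Gamma'(\rho)\ge 0$; since $\Gamma$ is continuous with $\Gamma(0)=0$, this gives the monotonicity. It is convenient to abbreviate $\mathbf{B}=\mathbf{H}'_k(\mathbf{H}'_k)^H$, a random positive semidefinite matrix, and $\overline{\mathbf{B}}=\mE\,\mathbf{B}$; by construction $\mathbf{I}+\rho\overline{\mathbf{B}}=\mathbf{I}+\mathbf{H}_k\mathbf{Q}\mathbf{H}_k^H$, so that
\[
\Gamma(\rho)=\ln|\mathbf{I}+\rho\overline{\mathbf{B}}|-\mE\,\ln|\mathbf{I}+\rho\mathbf{B}| .
\]
Both terms are finite since $0\le\ln|\mathbf{I}+\rho\mathbf{B}|\le\rho\,\tr\mathbf{B}$ and $\mE\,\tr\mathbf{B}<\infty$ for a Gaussian $\mathbf{H}'_k$, and the same bound on the derivative (see below) lets us differentiate under the expectation by dominated convergence.

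Next I would compute $\Gamma'$. Using $\frac{d}{d\rho}\ln|\mathbf{I}+\rho\mathbf{M}|=\tr[(\mathbf{I}+\rho\mathbf{M})^{-1}\mathbf{M}]\in[0,\tr\mathbf{M}]$ together with the resolvent identity $(\mathbf{I}+\rho\mathbf{M})^{-1}\mathbf{M}=\tfrac1\rho\big(\mathbf{I}-(\mathbf{I}+\rho\mathbf{M})^{-1}\big)$, valid for any $\mathbf{M}\succeq 0$ and $\rho>0$, the dimension terms cancel and one is left with
\[
\begin{aligned}
\Gamma'(\rho)&=\tr\big[(\mathbf{I}+\rho\overline{\mathbf{B}})^{-1}\overline{\mathbf{B}}\big]-\mE\,\tr\big[(\mathbf{I}+\rho\mathbf{B})^{-1}\mathbf{B}\big]\\
&=\tfrac1\rho\Big(\tr\,\mE\big[(\mathbf{I}+\rho\mathbf{B})^{-1}\big]-\tr\big(\mE[\mathbf{I}+\rho\mathbf{B}]\big)^{-1}\Big),
\end{aligned}
\]
where the last step used linearity of $\tr$ and $\mE$ and $\mE[\mathbf{I}+\rho\mathbf{B}]=\mathbf{I}+\rho\overline{\mathbf{B}}$. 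Thus everything reduces to the matrix inequality $\mE\big[(\mathbf{I}+\rho\mathbf{B})^{-1}\big]\succeq\big(\mE[\mathbf{I}+\rho\mathbf{B}]\big)^{-1}$ in the L\"owner order, after which taking traces (which is monotone for $\succeq$) and multiplying by $1/\rho>0$ finishes the proof.

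This last inequality is the crux, and it is exactly the operator Jensen inequality for the map $\mathbf{X}\mapsto\mathbf{X}^{-1}$, which is operator convex on the positive definite cone. A self-contained route I would take is the tangent‑plane (gradient) inequality for operator convex functions applied to $\phi(\mathbf{M})=(\mathbf{I}+\rho\mathbf{M})^{-1}$ — operator convex, being the composition of an affine map with the operator convex inverse — which gives, for every $\mathbf{M}\succeq 0$,
\[
(\mathbf{I}+\rho\mathbf{M})^{-1}\succeq(\mathbf{I}+\rho\overline{\mathbf{B}})^{-1}-\rho\,(\mathbf{I}+\rho\overline{\mathbf{B}})^{-1}(\mathbf{M}-\overline{\mathbf{B}})(\mathbf{I}+\rho\overline{\mathbf{B}})^{-1};
\]
setting $\mathbf{M}=\mathbf{B}$ and taking expectations makes the linear term vanish and yields the claim. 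Hence $\Gamma'(\rho)\ge 0$ for all $\rho>0$, with strict inequality (so $\Gamma$ strictly increasing) unless $\mathbf{H}'_k(\mathbf{H}'_k)^H$ is almost surely deterministic, by strict operator convexity of the inverse. The only genuine obstacle is this operator‑convexity input; the differentiation and the algebraic reductions are routine. As a sanity check, the same two ingredients (concavity of $\ln|\cdot|$ and Jensen) give $\Gamma(\rho)\ge\Gamma(0)=0$, so Theorem~\ref{thm:monotone_inc} is the sharpening which says the ESEI‑to‑EWSR gap is nondecreasing in SNR and is therefore attained, as a supremum, in the infinite‑SNR regime analyzed in the next subsections.
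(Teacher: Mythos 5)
Your proposal is correct and follows essentially the same route as the paper's proof: differentiate $\Gamma$, use the resolvent identity $(\mathbf{I}+\rho\mathbf{M})^{-1}\mathbf{M}=\tfrac1\rho\bigl(\mathbf{I}-(\mathbf{I}+\rho\mathbf{M})^{-1}\bigr)$ to cancel the constant terms, and conclude by Jensen's inequality for the operator-convex map $\mathbf{X}\mapsto\mathbf{X}^{-1}$. You merely supply details the paper leaves implicit (justifying differentiation under the expectation, the tangent-plane derivation of the operator Jensen inequality, and the strictness condition), which is a welcome but not substantively different elaboration.
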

\begin{proof}
By Jensen's inequality, $\Gamma_k \ge 0$.  To show the monotonicity, we show that the derivative with respect to $\rho$ is always non-negative. We omit the subscripts and superscripts on $\bmH$ for convenience. 
\beq
\begin{aligned} 
&\frac{\partial}{\partial \rho} \lb \ln |\mathbf{I} + \rho \mE \bmH \bmH^{H} | - \mE \ln| \mathbf{I} + \rho  \bmH \bmH^{H} |    \rb = \\
&\tr \lb \{\mathbf{I} + \rho \mE \bmH \bmH^{H}\}^{-1}\mE \bmH \bmH^{H}   -   \mE \lb \{\mathbf{I} + \rho  \bmH \bmH^{H}\}^{-1} \bmH \bmH^{H} \rb \rb
\end{aligned}
\eeq
Noting that, $\{\mathbf{I} + \rho \mE \bmH \bmH^{H}\}^{-1}\mE \bmH \bmH^{H} $ can be written as 
$\frac{1}{\rho}\bmI  - \frac{1}{\rho}\{\mathbf{I} + \rho \mE \bmH \bmH^{H}\}^{-1}$,
\beq
\begin{array} {l}
\frac{\partial}{\partial \rho} \lb \ln |\mathbf{I} + \rho \mE \bmH \bmH^{H} | - \mE \ln| \mathbf{I} + \rho  \bmH \bmH^{H} |    \rb = \\
\frac{1}{\rho} \tr \, \mE \lb \{\mathbf{I} + \rho  \bmH \bmH^{H}\}^{-1}\rb - \tr \, \frac{1}{\rho}\{\mathbf{I} + \rho \mE \bmH \bmH^{H}\}^{-1}    \ge 0
\end{array}
\eeq
where we have applied Jensen's inequality as  $\{\mathbf{I} + \rho  \bmH \bmH^{H}\}^{-1} $ is a convex function.
\end{proof}
As a result, the largest value of $\Gamma$ will be observed at infinite SNR for a general non-zero mean MIMO with channel $\bmH$ with arbitrary transmit covariance matrix. Now, following the same steps as in \cite{ShaoKin2017},
we can obtain, 
\beq
\begin{aligned}
\text{ESEI-WSR} -  \sum_{k=1}^K u_k \Gamma_{{k}}(\infty) & \le  
\text{ESEI-WSR} -  \sum_{k=1}^K u_k \Gamma_{{k}}(\rho) & \\ 
\le  \text{EWSR} \le   \\
 \text{ESEI-WSR} +  \sum_{k=1}^K u_k \Gamma_{\bar{k}}(\rho) & \le 
 \text{ESEI-WSR} +  \sum_{k=1}^K u_k \Gamma_{\bar{k}}(\infty). &
\end{aligned}
\eeq
In the above, $\Gamma_k$ and $\Gamma_{\bar{k}}$ are terms corresponding to the first and the second terms of equation \eqref{eqWSR9}. Remains now to obtain the $\Gamma(\infty)$ for different scenarios. However, we first look at the Taylor series expansion of EWSR to gain further insight.
\subsection{Second-Order Taylor Series Expansion of EWSR }
\label{subsec:TaylExp}
Consider the Taylor series expansion for matrices $\bmX$, $\bmY$ of dimension $N_k \times M$. 
\beq
\ln |\bmX+\bmY| \approx \ln |\bmX| + \tr{\bmX^{-1}\bmY} - \frac{1}{2} \tr{\bmX^{-1}\bmY\bmX^{-1}\bmY}
\eeq
Consider $\bmX+\bmY = \bmI + \rho \bmH\bmH^{H} $, $\bmH = \bmHb + \bmHt \bmC^{\frac{1}{2}}$, $ \bmHt \sim \CN(0,\bmI)$. For expansion around $\bmI + \rho \mE \, \bmH\bmH^{H}$, choose $\bmX = \bmI + \rho \mE \, \bmH\bmH^{H}$, $\bmY = \rho \lb \mE \, \bmH\bmH^{H} -  \bmH\bmH^{H} \rb $. Hence, we get,
\beq
\begin{aligned}
&\mE \, \ln | \bmI + \rho \bmH\bmH^{H} | \approx \ln | \bmI + \rho \mE \, \bmH\bmH^{H} | -\\
 &\frac{\rho^2}{2} \mE \, \tr \{\bmX^{-1}(\bmH\bmH^{H} - \mE \, \bmH\bmH^{H})  \bmX^{-1}(\bmH\bmH^{H} - \mE \, \bmH\bmH^{H}) \}
 \end{aligned}
\eeq
Using 4th order Gaussian moments \cite{JansStoicaTAC1988}, we get
\beq
\begin{aligned}
&\mE \, \ln | \bmI + \rho \bmH\bmH^{H} | \approx \ln | \bmI + \rho \mE \, \bmH\bmH^{H} | -  \frac{\rho^2}{2} \tr \bigg\{ \tr\{\bmX^{-1} \}^2 \bmC^2\\
&+2\tr\{\bmX^{-1} \} \bmHb^{H} \bmX^{-1} \bmHb\bmC - ( \bmHb^{H} \bmX^{-1} \bmHb )^2 \bigg\}.
 \end{aligned}
\eeq

Let us denote this second order approximation 
 by $\Gamma_2(\rho)$. i.e,
\beq
\begin{aligned}
\Gamma_2(\rho) &= \frac{\rho^2}{2} \tr \bigg\{ \tr\{\bmX^{-1} \}^2 \bmC^2  \\
&+2\tr\{\bmX^{-1} \} \bmHb^{H} \bmX^{-1} \bmHb\bmC - ( \bmHb^{H} \bmX^{-1} \bmHb )^2 \bigg\}.
\end{aligned}
\eeq

Consider the mean zero special case, $\bmHb = 0$. Then, $\mE \, \bmH\bmH^{H} =\tr\{\bmC\}\bmI$ and 
$\bmX=\bmI_{N_k} + \rho \tr\{\bmC\}\bmI_{N_k}$. Therefore,
\beq
\begin{aligned}
&\mE \, \ln | \bmI + \rho \bmH\bmH^{H} | \approx \ln (1 + \rho \tr(\bmC)) -\frac{\rho^2 N_k^2}{2}  \frac{ \tr \{ \bmC^2  \}}{(1+\rho \tr\{\bmC\})^2}.
 \end{aligned}
\eeq
At high SNR, as $\rho \to \infty$,
\beq
\label{eq:sec_order_taylor_infty}
\begin{aligned}
&\mE \, \ln | \bmI + \rho \bmH\bmH^{H} | \approx \ln (1 + \rho \tr(\bmC)) -\frac{N_k^2}{2}  \frac{ \tr \{ \bmC^2  \}}{(\tr\{\bmC\})^2}.
 \end{aligned}
\eeq
Thus,
\beq
\begin{aligned}
&\Gamma_2(\infty) =\frac{N_k^2}{2}  \frac{ \tr \{ \bmC^2  \}}{(\tr\{\bmC\})^2}
\end{aligned}
\eeq

Continuing from Theorem \ref{thm:monotone_inc}, we now determine the value of $\Gamma(\infty)$ for different scenarios.
\subsection{MISO independent and identically distributed (iid) channel}
\label{subsec:miso_iid}
In the MISO iid channel, the relevant metric is of the form  $\ln(1 + ||\bmh||^2)$, where $\bmh$ is the $ 1 \times M$ MISO channel vector. 
\begin{theorem}
\begin{equation}
\label{eq:exp_wsr_eqn_simplified}
0 \le  \ln(1+M\rho )  - \mathbf{E} \ln(1+\rho ||\bmh||^2 ) \le \gamma - \left( \sum_{k=1}^{M}\frac{1}{k}-\ln(M) \right) + \frac{1}{M} ,
\end{equation}
where $\rho$ is the SNR, $\gamma$ is Euler constant.
\end{theorem}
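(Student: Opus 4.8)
The plan is to combine Jensen's inequality for the lower bound with the monotonicity result of Theorem~\ref{thm:monotone_inc} for the upper bound, so that the whole statement reduces to a single scalar computation at $\rho=\infty$. In the MISO i.i.d.\ case the channel $\bmh$ is $1\times M$ with entries i.i.d.\ $\CN(0,1)$, so $\|\bmh\|^2=\bmh\bmh^{H}$ is a sum of $M$ i.i.d.\ unit-mean exponential random variables, i.e.\ $\|\bmh\|^2$ follows a $\mathrm{Gamma}(M,1)$ law with $\mathbf{E}\,\|\bmh\|^2=M$. The left inequality is then immediate from concavity of $\ln$: $\mathbf{E}\,\ln(1+\rho\|\bmh\|^2)\le\ln(1+\rho\,\mathbf{E}\|\bmh\|^2)=\ln(1+M\rho)$; this is just the statement $\Gamma(\rho)\ge 0$ of Theorem~\ref{thm:monotone_inc} specialised to this scalar instance.

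For the upper bound I would invoke Theorem~\ref{thm:monotone_inc}: the gap
\[
\Gamma(\rho)=\ln(1+M\rho)-\mathbf{E}\,\ln(1+\rho\|\bmh\|^2)
\]
is nondecreasing in $\rho$, hence $\Gamma(\rho)\le\Gamma(\infty):=\lim_{\rho\to\infty}\Gamma(\rho)$, and it remains to evaluate $\Gamma(\infty)$. Writing $\ln(1+M\rho)=\ln\rho+\ln(M+\tfrac1\rho)$ and $\ln(1+\rho\|\bmh\|^2)=\ln\rho+\ln(\|\bmh\|^2+\tfrac1\rho)$, the $\ln\rho$ terms cancel, so that $\Gamma(\rho)=\ln(M+\tfrac1\rho)-\mathbf{E}\,\ln(\|\bmh\|^2+\tfrac1\rho)$. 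The first term converges to $\ln M$; for the second, for every $\rho\ge1$ one has the sandwich $\ln\|\bmh\|^2\le\ln(\|\bmh\|^2+\tfrac1\rho)\le\ln(1+\|\bmh\|^2)$, and both envelopes are integrable against the $\mathrm{Gamma}(M,1)$ density (near the origin $x^{M-1}|\ln x|$ is integrable for all $M\ge1$, and $\mathbf{E}\,\ln(1+\|\bmh\|^2)\le\mathbf{E}\,\|\bmh\|^2=M$), so dominated convergence gives $\mathbf{E}\,\ln(\|\bmh\|^2+\tfrac1\rho)\to\mathbf{E}\,\ln\|\bmh\|^2$. Therefore $\Gamma(\infty)=\ln M-\mathbf{E}\,\ln\|\bmh\|^2$.

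The last ingredient is the log-moment of a Gamma variable: a standard computation (differentiate the normalisation $\int_0^\infty x^{s-1}e^{-x}\,dx=\Gamma(s)$ in the shape parameter at $s=M$) gives $\mathbf{E}\,\ln\|\bmh\|^2=\psi(M)$ with $\psi$ the digamma function, and the recursion $\psi(1)=-\gamma$, $\psi(m+1)=\psi(m)+\tfrac1m$ yields $\psi(M)=-\gamma+\sum_{k=1}^{M-1}\tfrac1k$. Substituting and using $\sum_{k=1}^{M-1}\tfrac1k=\sum_{k=1}^{M}\tfrac1k-\tfrac1M$ gives
\[
\Gamma(\infty)=\ln M+\gamma-\sum_{k=1}^{M-1}\frac1k=\gamma-\Big(\sum_{k=1}^{M}\frac1k-\ln M\Big)+\frac1M,
\]
which, together with $\Gamma(\rho)\le\Gamma(\infty)$, is exactly the asserted right-hand bound and completes the argument.

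As for where the work sits: there is no deep obstacle, since Theorem~\ref{thm:monotone_inc} does the conceptual lifting by collapsing the $\rho$-dependent gap to the single value $\Gamma(\infty)$. The only points requiring care are the interchange of limit and expectation (a routine dominated-convergence argument once $\rho$ is restricted to $[1,\infty)$ and the mild integrability of $\ln\|\bmh\|^2$ near $0$ is noted) and the identification of $\mathbf{E}\,\ln\|\bmh\|^2$ with the digamma function, i.e.\ with the truncated harmonic series; matching the constant to the stated form is then merely the identity $\sum_{k=1}^{M-1}\tfrac1k=\sum_{k=1}^{M}\tfrac1k-\tfrac1M$.
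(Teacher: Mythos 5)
Your proof is correct and follows essentially the same route as the paper: Jensen for the lower bound, the monotonicity of $\Gamma(\rho)$ to reduce the upper bound to $\Gamma(\infty)$, and the identity $\mathbf{E}\ln\|\bmh\|^2=-\gamma+\sum_{k=1}^{M-1}\tfrac1k$ for the Gamma$(M,1)$ variable. The only differences are cosmetic: you evaluate the log-moment via the digamma function $\psi(M)=\Gamma'(M)/\Gamma(M)$ instead of the paper's explicit integration-by-parts recursion with the exponential integral $\mathrm{Ei}$, and you supply the dominated-convergence justification for the $\rho\to\infty$ interchange that the paper leaves implicit.
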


\begin{proof}
The proof is given in Appendix \ref{sec:coll_proofs}.
\end{proof}

Note that for $M=1$, the bound reduces to that in \cite{ShaoKin2017}, namely $\gamma$. Thus, this bound  is a much more refined and tighter bound than what is provided in \cite{ShaoKin2017}. 

We further explore the bound using the properties of the harmonic series. Define $\mathcal{H}_{M} = \sum_{k=1}^{M}\frac{1}{k} $. It is known that,
\beq
\label{eq:harm_ser}
\mathcal{H}_{M} = \ln (M) + \gamma + \frac{1}{2 M} - \frac{1}{12 M^2} + \frac{1}{120 M^4} \cdots
\eeq
Using this in \eqref{eq:exp_wsr_eqn_simplified}, we get
\beq
\gamma - (\mathcal{H}_{M} - \ln (M)) + \frac{1}{M} = \frac{1}{2 M} + \frac{1}{12 M^2} - \frac{1}{120 M^4} \cdots
\eeq
Thus, we see that the second order term for the bound is $\frac{1}{2 M}$, which is also in agreement with equation \eqref{eq:sec_order_taylor_infty}. In the iid case, $\bmC = \bmI_{M}$, hence,
  \beq
\frac{1}{2}  \frac{ \tr \{ \bmC^2  \}}{(\tr\{\bmC\})^2} =  \frac{\sum_{i=1}^{M} 1 }{ 2 (\sum_{i=1}^{M} 1)^2} = \frac{1}{2 M}
\eeq

\subsection{MISO correlated channel}
\label{subsec:gap_miso_corr}

\begin{theorem}
\begin{equation}
\label{eq:exp_wsr_eqn_simplified_corr_miso}
\begin{aligned}
0 & \le  \ln(1+\rho \sum_{i=1}^{p} \lambda_i )  - \mathbf{E} \ln(1+\rho || \bmh||^2)  \\
& \le \gamma - \lb \sum_{i=1}^{p} \frac{\ln \lambda_i}{ \pi_{l \ne i} (1 - \lambda_l/\lambda_i)} -\ln(\sum_{i=1}^{p} \lambda_i) \rb ,
\end{aligned}
\end{equation}
where $\rho$ is the SNR, $\gamma$ is Euler constant, $\lambda_i \cdots \lambda_p$ are the $p$ non-zero eigen values of the correlation matrix $\mE \, \bmh\bmh^{H}$. 
\end{theorem}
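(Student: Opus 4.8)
The strategy is to combine the monotonicity result of Theorem~\ref{thm:monotone_inc} with a closed-form evaluation of the infinite-SNR limit of the gap. The left-hand inequality is Jensen's inequality for the concave function $\ln(\cdot)$ and holds for every $\rho$. For the right-hand inequality, apply Theorem~\ref{thm:monotone_inc} with $\bmH$ taken to be the $1\times M$ row vector $\bmh$: the gap $\ln(1+\rho\sum_{i=1}^p\lambda_i)-\mE\ln(1+\rho\|\bmh\|^2)$ is nondecreasing in $\rho$, hence bounded above by its limit as $\rho\to\infty$, which I will denote $\Gamma(\infty)$. It then remains to show $\Gamma(\infty)=\gamma+\ln(\sum_{i=1}^p\lambda_i)-\sum_{i=1}^p\ln\lambda_i/\prod_{l\ne i}(1-\lambda_l/\lambda_i)$.

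The first step is to diagonalize: rotating the i.i.d.\ Gaussian vector $\bmh$ into the eigenbasis of its (zero-mean) transmit correlation gives $\|\bmh\|^2=\sum_{i=1}^p\lambda_i X_i$, where the $X_i$ are i.i.d.\ unit-mean exponential random variables and $\lambda_1,\dots,\lambda_p$ are the non-zero eigenvalues of the statement; in particular $\mE\|\bmh\|^2=\sum_{i=1}^p\lambda_i$, consistent with the ESEI term on the left. Writing $\ln(1+\rho Y)=\ln\rho+\ln(\rho^{-1}+Y)$ with $Y:=\sum_{i=1}^p\lambda_i X_i>0$, so that the $\ln\rho$ cancels between the two terms of the gap, and using $|\ln(\rho^{-1}+Y)|\le|\ln Y|+Y$ for $\rho\ge1$ together with $\mE|\ln Y|<\infty$ (valid for a finite positive combination of exponentials), dominated convergence gives
\beq
\Gamma(\infty)=\ln\Big(\sum_{i=1}^p\lambda_i\Big)-\mE\,\ln\Big(\sum_{i=1}^p\lambda_i X_i\Big).
\eeq

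The second step evaluates $\mE\ln Y$ in closed form. Assuming first that the $\lambda_i$ are pairwise distinct, $Y$ is hypoexponential with density $f_Y(y)=\sum_{i=1}^p\lambda_i^{-1}\big(\prod_{l\ne i}(1-\lambda_l/\lambda_i)\big)^{-1}e^{-y/\lambda_i}$ on $y>0$, obtained by partial-fraction inversion of the Laplace transform $\mE\,e^{-sY}=\prod_{i=1}^p(1+s\lambda_i)^{-1}$. Combining the elementary integral $\lambda_i^{-1}\int_0^\infty\ln y\,e^{-y/\lambda_i}\,dy=\ln\lambda_i-\gamma$ with the normalization identity $\sum_{i=1}^p\big(\prod_{l\ne i}(1-\lambda_l/\lambda_i)\big)^{-1}=1$ (which is just $\int_0^\infty f_Y=1$, equivalently a Lagrange-interpolation identity), the $\gamma$-contributions collapse to a single $-\gamma$ and one gets $\mE\ln Y=-\gamma+\sum_{i=1}^p\ln\lambda_i/\prod_{l\ne i}(1-\lambda_l/\lambda_i)$. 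Substituting into the displayed expression for $\Gamma(\infty)$ yields exactly the asserted upper bound.

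The main obstacle is the degenerate configuration where some eigenvalues coincide --- in particular the i.i.d.\ sub-case of Section~\ref{subsec:miso_iid}, all $\lambda_i=1$ --- since the right-hand side of \eqref{eq:exp_wsr_eqn_simplified_corr_miso} is then a $0/0$ expression. I would dispatch this by continuity: the density of $\sum_{i=1}^p\lambda_i X_i$, and therefore $\mE\ln Y$, depends continuously (indeed smoothly) on the weights $(\lambda_1,\dots,\lambda_p)$ over the positive orthant, so the right-hand side of \eqref{eq:exp_wsr_eqn_simplified_corr_miso} extends continuously across the coincidence set (the apparent poles being removable) and the bound persists there; the bound of Section~\ref{subsec:miso_iid} is recovered in the limit $\lambda_i\to1$. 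The remaining technical point, interchanging $\lim_{\rho\to\infty}$ with $\mE$, is routine and is already covered by the dominated-convergence estimate above.
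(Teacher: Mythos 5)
Your proposal is correct and follows essentially the same route as the paper: reduce to the infinite-SNR gap via Theorem~\ref{thm:monotone_inc}, expand $\|\bmh\|^2$ as a weighted sum of unit-mean exponentials, and evaluate $\mE\ln Y$ using the hypoexponential density together with $\lambda_i^{-1}\int_0^\infty e^{-y/\lambda_i}\ln y\,dy=\ln\lambda_i-\gamma$ and the normalization (Lagrange-interpolation) identity. The only difference is that you supply two justifications the paper elides --- dominated convergence for the $\rho\to\infty$ interchange and a continuity argument for coinciding eigenvalues, which the paper simply assumes away.
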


\begin{proof}
The proof is given in Appendix \ref{sec:coll_proofs}.
\end{proof}

From the second order Taylor series expansion (equation \eqref{eq:sec_order_taylor_infty}), the second order term of this bound is 
\beq
\frac{1}{2}  \frac{ \tr \{ \bmC^2  \}}{(\tr\{\bmC\})^2} =  \frac{\sum_{i=1}^{p} \lambda^2_i }{ 2 (\sum_{i=1}^{p} \lambda_i)^2} 
\eeq

\subsection{MIMO zero mean i.i.d channel}
\label{subsec:gap_mimo_zm_uncorr}

In a multi-user scenario, the regime of interest is $M \ge N_k$. To tackle this scenario, we 
first introduce the LDU (Lower Diagonal Upper triangular factorization) of the channel Gram matrix,
\beq
\bmH\bmH^{H} = \mathbf{L} \bmD \mathbf{L}^{H} =  (\mathbf{L} \bmD^{\frac{1}{2}}) (\mathbf{L} \bmD^{\frac{1}{2}})^{H}
\eeq
where $\mathbf{L}$ has unit diagonal and $\bmD$ is a diagonal matrix with diagonal entries ($\bmD_i$) greater than zero. The second factorization corresponds to a Cholesky decomposition. The Cholesky factorization of a Wishart matrix (such as 
$\bmH\bmH^{H}$) leads to, 
\beq
\begin{cases}
       \bmD_{i} \sim \frac{1}{2}\chi^2_{2(M-i+1)}, i \in 1 \cdots N_k \\
       \mathbf{L}_{i,j}\bmD^{\frac{1}{2}}_{i} \sim \CN(0,1) , i > j
\end{cases}
\eeq
which is also known as Bartlett's decomposition \cite{MultivarStats2008}. Note that $|\bmH\bmH^{H}| = |\mathbf{L} \bmD \mathbf{L}^{H}| = |\bmD|$. Hence, $\ln |\bmH\bmH^{H}| = \sum_{i=1}^{N_k} \ln |\bmD_i|$ and the MIMO case reduces to a sum of MISO scenarios, each having a $\chi^2$ distribution with a reducing number of degrees of freedom. Thus, reusing the results in section \ref{subsec:miso_iid}, we get
\beq
\begin{aligned}
  &\Gamma(\infty) =\\
  & \sum_{i=1}^{N_k} \lb  \gamma - \lb \sum_{k=1}^{M-i+1}  \frac{1}{k} - \ln(M-i+1) \rb + \frac{1}{M-i+1} \rb  \\
  &+ N_k \ln(M) -   \sum_{i=1}^{N_k} \ln(M-i+1) \\
  &= \sum_{i=1}^{N_k} \lb  \gamma - \lb \sum_{k=1}^{M-i}  \frac{1}{k} - \ln(M) \rb  \rb  
\end{aligned}   
\eeq
where the second term addresses the fact that the MISO gaps in section \ref{subsec:miso_iid} were computed with respect to the ESEI-WSR limit of $\ln(1+ \rho(M-i+1))$, whereas in the MIMO zero mean i.i.d scenario, the ESEI-WSR limit is 
$ N_k \ln(1+ \rho M)$. 
For illustration, let us also consider $M \gg N_k$. Then using the approximation of the Harmonic series, it can be easily shown that
$\Gamma(\infty) \approx \frac{N_k^2}{2 M}$,
which concurs with the second order Taylor series term in \eqref{eq:sec_order_taylor_infty}.


\section{Numerical Results}
\label{sec:num_res}
Figure \ref{fig:ewsr_1} verifies the infinite-SNR bounds for MISO i.i.d scenario by comparing them against the true values of the gap for different SNRs and different values of $M$. The true values of the gap are obtained by explicitly performing the integration in Matlab. As expected, the gap is zero at very low SNR. As the SNR increases, the gap monotonically increases to the infinite SNR limit, as predicted in section \ref{subsec:gap_monotone}. In addition, the gap reduces rapidly with increasing $M$. As the MIMO i.i.d case is a sum of MISO i.i.d scenarios, these curves apply to the MIMO i.i.d scenario as well.

\begin{figure}[htb]
\begin{minipage}[b]{1.0\linewidth}
  \centering
  \centerline{\includegraphics[width=\textwidth]{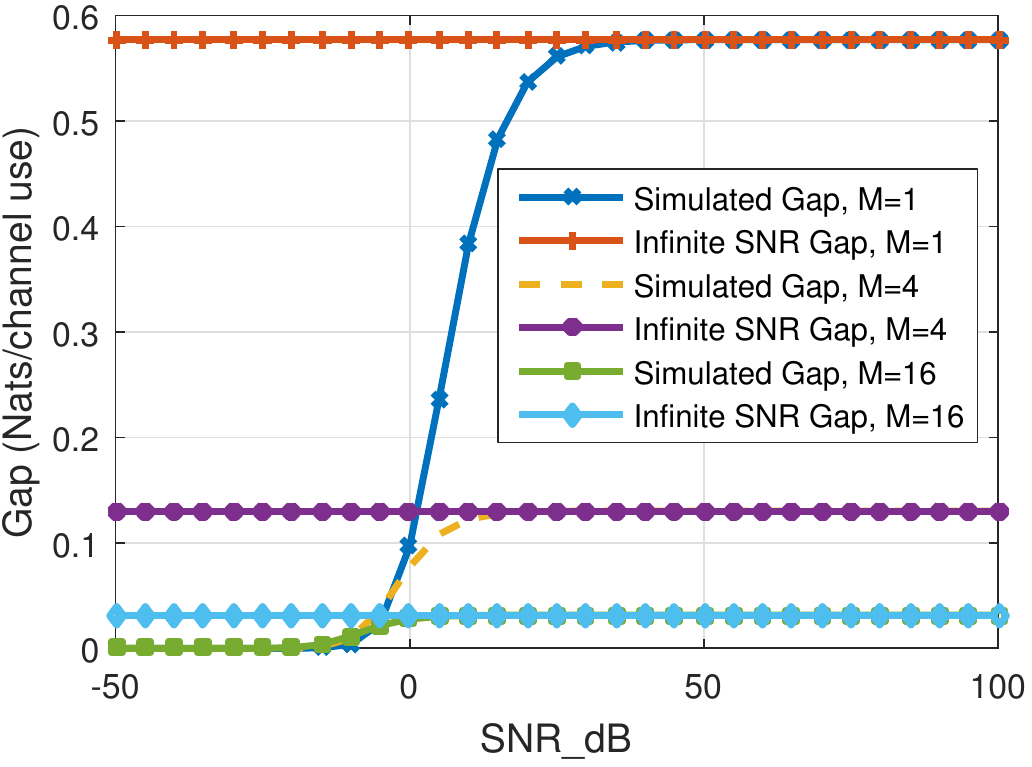}}
\caption{Gap between ESEI-WSR and EWSR for the MISO i.i.d scenario for different values of transmit antennas.}
\label{fig:ewsr_1}
\end{minipage}

\end{figure}
Further, to verify the goodness of the second order Taylor series approximation, Figure \ref{fig:ewsr_2} compares the true gap to the gap approximated  from the Taylor series expansion for a zero mean correlated MIMO scenario. This scenario is chosen as we expect gap to be maximum here. The number of receive antennas for each user was chosen as $N_k=4$. $\rho$ was chosen as 1000. As expected, the Taylor series approximation becomes more accurate with increasing number of Tx antennas. Indeed, even in this MIMO correlated scenario, the gap also reduces quickly as the number of Tx antennas increase.

\begin{figure}[htb]
\begin{minipage}[b]{1.0\linewidth}
  \centering
    \centerline{\includegraphics[width=\textwidth]{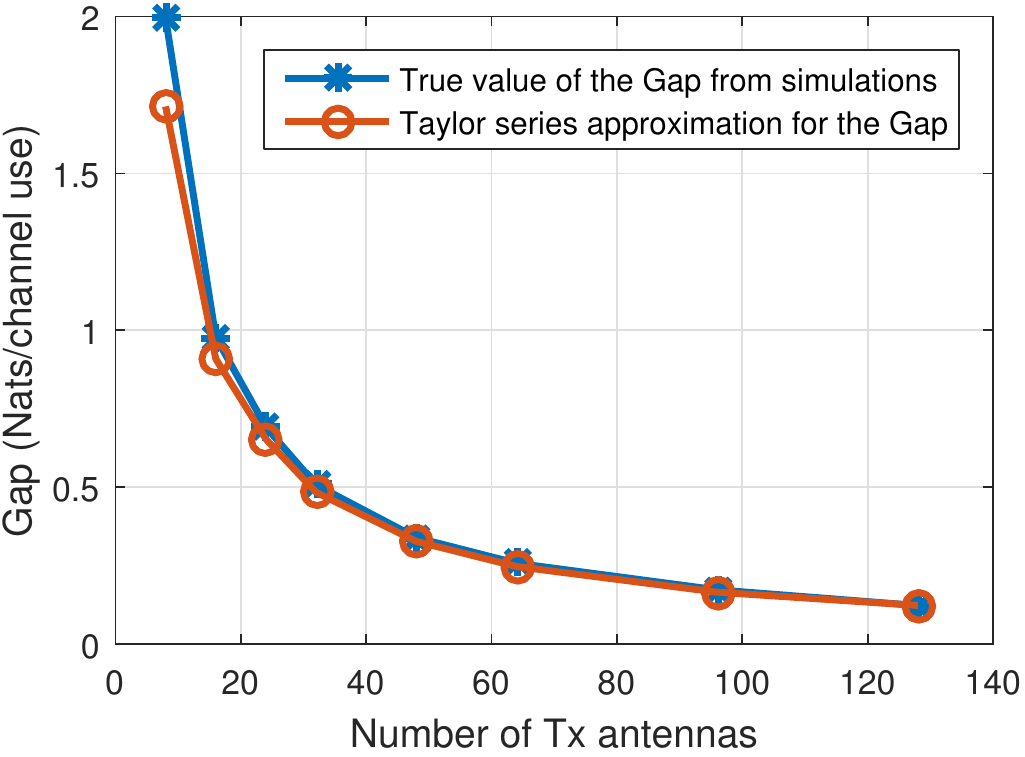}}
\caption{Comparison of the gap obtained from the second order Taylor series approximation and the true value of the gap for a MIMO correlated  scenario. The number of antennas at each receiver, $N_k$, is taken as 4.}
\label{fig:ewsr_2}
\end{minipage}

\end{figure}

\section{Conclusion}
\label{sec:conclusion}
In this paper, we have motivated the use of the ESEI-WSR metric (or the MaMIMO limit of the EWSR) for utility optimization involving partial CSIT. Towards this end, we presented a refined bound for the gap between EWSR and the ESEI-WSR. We first showed that the gap is maximum at infinite SNR. The results clearly show that the gap reduces with the number of transmit antennas - thereby concurring with the well known result for the MaMIMO limit.
The general case of correlated MIMO channel with non-zero mean is a future work to be addressed. However, we conjecture that in the case of a non-zero mean MIMO, the gap would further reduce based on the rice factor (the ratio of the power in the mean to that of the random part). 
However, a few comments are in order. Whenever $\Gamma(\infty)$ is closely approximated by $\Gamma_2(\infty)$
then $\Gamma(\rho)$ should be closely approximated by $\Gamma_2(\rho)$ also.
We can also observe that whenever the gap $\Gamma(\rho)$ gets small, the second-order term $\Gamma_2(\rho)$
becomes good, in the sense that
$\Gamma(\rho) = \Gamma_2(\rho) + O(\Gamma_2^2(\rho))$.

%


\section*{Acknowledgments}

EURECOM's research is partially supported by its industrial members:
ORANGE, BMW,  ST Microelectronics,
Sy\-man\-tec, SAP, Monaco Telecom, iABG,  by the projects HIGHTS (EU H2020)
and MASS-START (French FUI).

\begin{appendices}

\section{Collection of proofs}
\label{sec:coll_proofs}
\begin{proof} {\textit{of Theorem 2}}
To ease the notation, we take
$x =||\bmh||^2$, where $x$ is Chi-squared distributed with mean $M$.
For a $\chi^2$ distribution with mean $M$ and $2N_t$ degrees of freedom,
\begin{equation}
f_X(x) = \frac{x^{M-1} e^{-x}}{(M-1)!}. 
\end{equation}
As $\gamma = -\int_0^\infty e^{-x}  \ln(x) dx$,
at high SNR ($\rho \to \infty$), 
\beq
\begin{aligned}
\label{eq:exp_wsr_eqn_highsnr_mult_ant}
\mathbf{E}_{x  } \ln(1+\rho x) &= \int_0^\infty f_X(x)  \ln(\rho x) dx \\
&= \int_0^\infty \frac{x^{M-1} e^{-x}}{(M-1)!}  \ln( x) dx +  \ln(\rho).
\end{aligned}
\eeq
We note the following,
\begin{equation}
   \int  e^{-x}  \ln( x) dx = -e^{-x}  \ln( x)  + \text{Ei}(-x) \text{,} \text{Ei}(x) = -\int_{-x}^{\infty} \frac{e^{-t}}{t} dt 
 \eeq
 \beq   
 \begin{aligned}
  &  -\int_0^\infty \frac{x^{(M-2)}}{(M-2)!} \text{Ei}(-x)dx = \int_0^\infty \frac{x^{(M-2)}}{(M-2)!}  \int_{x}^{\infty} \frac{e^{-t}}{t} dt dx  \\
    &\qquad \qquad=  \int_0^\infty  \left( \int_{0}^{t}\frac{x^{(M-2)}}{(M-2)!} dx \right) \frac{e^{-t}}{t} dt \\
     &\qquad \qquad= \int_0^\infty \frac{t^{(M-1)}}{(M-1)!} \frac{e^{-t}}{t} dt 
     = \frac{1}{M-1}
\end{aligned}   
\end{equation}
Integrating by parts ($M \ge 2$),
\begin{equation}
\begin{aligned}
 &\int_0^\infty \frac{x^{M-1} e^{-x}}{(M-1)!}  \ln( x) dx =\\
 & \frac{x^{M-1}}{(M-1)!} \left(  -e^{-x}  \ln( x)  + \text{Ei}(-x)\right)_0^{\infty} -\\
 & \int_0^\infty \frac{x^{M-2}}{(M-2)!} \left(  -e^{-x}  \ln( x)  + \text{Ei}(-x)\right)
\end{aligned}
\end{equation}
The first part in the above equation is zero, so we only need to focus on the second portion of the integral. 
\begin{equation}
\begin{aligned}
& \int_0^\infty \frac{x^{M-1} e^{-x}}{(M-1)!}  \ln( x) dx \\
&= - \int_0^\infty \frac{x^{M-2}}{(M-2)!} \left(  -e^{-x}  \ln( x)  + \text{Ei}(-x)\right) \\
&= \int_0^\infty \frac{x^{M-2} e^{-x}}{(M-2)!} \ln( x) dx -  \int_0^\infty \frac{x^{M-2}}{(M-2)!}\text{Ei}(-x) \\
 &= \int_0^\infty \frac{x^{M-2} e^{-x}}{(M-2)!} \ln( x) dx +  \frac{1}{M-1}
\end{aligned}
\end{equation}
The above is a recursive equation, from where, we quickly deduce that,
\begin{equation}
\begin{aligned}
 \int_0^\infty \frac{x^{M-1} e^{-x}}{(M-1)!}  \ln( x) dx &= \int_0^\infty e^{-x} \ln( x) dx +  \sum_k^{M-1}\frac{1}{k} \\
 &= -\gamma + \sum_k^{M-1}\frac{1}{k} 
\end{aligned}
\end{equation}
Thus, we can now write \eqref{eq:exp_wsr_eqn_highsnr_mult_ant} as,
\begin{equation}
\label{eq:bound_miso_iid}
\begin{aligned}
\mathbf{E}_{x  } \ln(1+\rho x) &= \int_0^\infty \frac{x^{M-1} e^{-x}}{(M-1)!}  \ln( x) dx +  \ln(\rho) \\
&= -\gamma + \sum_{k=1}^{M-1}\frac{1}{k}  + \ln(\rho) \\
&=  -\gamma + \left( \sum_{k=1}^{M-1}\frac{1}{k}-\ln(M) \right)  + \ln(M \rho) 
\end{aligned}
\end{equation}
\end{proof}

\begin{proof}{\textit{of Theorem 3}.}

For a correlated MISO scenario, we can write equivalently,
\beq
\ln |1+ \rho || \bmh ||^2| = \ln |1+ \rho \sum_{i=1}^{p} \lambda_i | \bmh_i | ^2|, 
\eeq
where $\lambda_i, i \in 1 \cdots p$ are the non-zero eigen values of the correlation matrix $\mE \, \bmh\bmh^{H}$, scaled in such a manner that $\sum_{i=1}^{p} \lambda_i = M$. $ \bmh_i \sim \CN(0,1)$. We make the reasonable assumption that all the non-zero eigen values are unequal. In this case, the probability distribution is given \cite{HammarwallTSP2008} as $\sum_{i=1}^{p} \frac{e^{-\frac{x}{\lambda_i}}}{\lambda_i \pi_{l \ne i} (1 - \lambda_l/\lambda_i)}$, where $x = \sum_{i=1}^{p} \lambda_i | \bmh_i | ^2$. Thus,
at high SNR ($\rho \to \infty$), 
\beq
\begin{aligned}
\label{eq:uncorr_miso_wsr_eqn_highsnr_mult_ant}
&\mathbf{E}_{x  } \ln(1+\rho x) = \int_0^\infty \sum_{i=1}^{p} \frac{e^{-\frac{x}{\lambda_i}}}{\lambda_i \pi_{l \ne i} (1 - \lambda_l/\lambda_i)} \ln( x) dx +  \ln(\rho) \\
&= \sum_{i=1}^{p}   \frac{ \int_0^\infty \frac{1}{\lambda_i}  e^{-\frac{x}{\lambda_i}}\ln( x) dx}{\pi_{l \ne i} (1 - \lambda_l/\lambda_i)}  +  \ln(\rho) \\
&= \sum_{i=1}^{p} \frac{-\gamma + \ln \lambda_i}{ \pi_{l \ne i} (1 - \lambda_l/\lambda_i)}+  \ln(\rho) \\
&= -\gamma + \lb \sum_{i=1}^{p} \frac{\ln \lambda_i}{ \pi_{l \ne i} (1 - \lambda_l/\lambda_i)} -\ln(\sum_{i=1}^{p} \lambda_i) \rb +  \ln(\rho\sum_{i=1}^{p} \lambda_i).
\end{aligned}
\eeq

\end{proof}

\end{appendices}

%
%

\vfill\pagebreak

\bibliographystyle{IEEEbib}
\bibliography{strings,refs}

\begin{thebibliography}{10}

\bibitem{ShaoKin2017}
M.~Shao and {M}a {W}ing {K}in,
\newblock ``{A simple way to approximate average robust multiuser MISO transmit
  optimization under covariance-based CSIT},''
\newblock in {\em {ICASSP} 2017, 42nd {IEEE} {I}nternational {C}onference on
  {A}coustics, {S}peech and {S}ignal {P}rocessing, {M}arch 5-9, 2017, {N}ew
  {O}rleans, {USA}}, {N}ew {O}rleans, {\'{E}}{TATS}-{UNIS}, 03 2017.

\bibitem{ChristensenTWC2008}
S.~S. Christensen, R.~Agarwal, E.~D. Carvalho, and J.~M. Cioffi,
\newblock ``Weighted sum-rate maximization using weighted mmse for mimo-bc
  beamforming design,''
\newblock {\em IEEE Transactions on Wireless Communications}, vol. 7, no. 12,
  pp. 4792--4799, December 2008.

\bibitem{KimGiannaOptRAM}
Seung-Jun Kim and G.B. Giannakis,
\newblock ``{Optimal Resource Allocation for MIMO Ad Hoc Cognitive Radio
  Networks},''
\newblock {\em IEEE Transactions on Information Theory}, vol. 57, no. 5, pp.
  3117--3131, May 2011.

\bibitem{FranciscoDirkAsil2005}
R.~de~Francisco and D.~T.~M. Slock,
\newblock ``Spatial transmit prefiltering for frequency-flat mimo transmission
  with mean and covariance information,''
\newblock in {\em Conference Record of the Thirty-Ninth Asilomar Conference
  onSignals, Systems and Computers, 2005.}, Oct, pp. 371--375.

\bibitem{YinGesbertJSAC2013}
H.~Yin, D.~Gesbert, M.~Filippou, and Y.~Liu,
\newblock ``A coordinated approach to channel estimation in large-scale
  multiple-antenna systems,''
\newblock {\em IEEE Journal on Selected Areas in Communications}, vol. 31, no.
  2, pp. 264--273, February 2013.

\bibitem{NegroDirkISWCS2012}
F.~Negro, I.~Ghauri, and D.~T.~M. Slock,
\newblock ``Sum rate maximization in the noisy mimo interfering broadcast
  channel with partial csit via the expected weighted mse,''
\newblock in {\em International Symposium on Wireless Communication Systems
  (ISWCS)}, Aug 2012, pp. 576--580.

\bibitem{BjornsonTSP2010}
E.~Bjornson, R.~Zakhour, D.~Gesbert, and B.~Ottersten,
\newblock ``Cooperative multicell precoding: Rate region characterization and
  distributed strategies with instantaneous and statistical csi,''
\newblock {\em IEEE Transactions on Signal Processing}, vol. 58, no. 8, pp.
  4298--4310, Aug 2010.

\bibitem{AlfanoVerduIntZurichSeminar2006}
G.~Alfano, A.~M. Tulino, A.~Lozano, and S.~Verdu,
\newblock ``Random matrix transforms and applications via non-asymptotic
  eigenanalysis,''
\newblock in {\em 2006 International Zurich Seminar on Communications}, 2006,
  pp. 18--21.

\bibitem{DumontInfoTheoryAsymptCovMIMO}
J.~Dumont, W.~Hachem, S.~Lasaulce, P.~Loubaton, and J.~Najim,
\newblock ``{On the Capacity Achieving Covariance Matrix for Rician MIMO
  Channels: An Asymptotic Approach},''
\newblock {\em IEEE Transactions on Information Theory}, vol. 56, no. 3, March
  2010.

\bibitem{TariccoInfoTheoryAsympMutualMIMO}
G.~Taricco,
\newblock ``{Asymptotic Mutual Information Statistics of Separately Correlated
  Rician Fading MIMO Channels},''
\newblock {\em IEEE Transactions on Information Theory}, vol. 54, no. 8, pp.
  3490--3504, Aug 2008.

\bibitem{KalySlockIcassp2017}
{K}alyana {G}opala and {D}irk~{TM} {S}lock,
\newblock ``{R}obust {MIMO} {OFDM} transmit beamformer design for large doppler
  scenarios under partial {CSIT},''
\newblock in {\em {ICASSP} 2017, 42nd {IEEE} {I}nternational {C}onference on
  {A}coustics, {S}peech and {S}ignal {P}rocessing, {M}arch 5-9, 2017, {N}ew
  {O}rleans, {USA}}, {N}ew {O}rleans, {\'{E}}{TATS}-{UNIS}, 03 2017.

\bibitem{JansStoicaTAC1988}
P.~H.~M. Janssen and P.~Stoica,
\newblock ``On the expectation of the product of four matrix-valued gaussian
  random variables,''
\newblock {\em IEEE Transactions on Automatic Control}, pp. 867--870, Sep 1988.

\bibitem{MultivarStats2008}
Robb~J. Muirhead,
\newblock {\em Aspects of Multivariate Statistical Theory},
\newblock John Wiley and Sons, Inc., 2008.

\bibitem{HammarwallTSP2008}
D.~Hammarwall, M.~Bengtsson, and B.~Ottersten,
\newblock ``Acquiring partial csi for spatially selective transmission by
  instantaneous channel norm feedback,''
\newblock {\em IEEE Transactions on Signal Processing}, vol. 56, no. 3, pp.
  1188--1204, March 2008.

\end{thebibliography}

\end{document}